\newenvironment{tab}{\begin{tabbing}
MMMMM\=aaa\=aaa\=aaa\=aaa\=aaa\=aaa\=aaa\=aaa\=aaa\=aaa\=aaa\=aaa\= \kill}{\end{tabbing}}
\def\refmystepcounter#1{\stepcounter{#1}\protect\gdef 
\@currentlabel {\csname p@#1\endcsname \csname 
the#1\endcsname}}
\newcounter {tabnr}
\newenvironment{tabn}{\begin{tabbing}
\refmystepcounter{tabnr}
MMMMM\=aaa\=aaa\=aaa\=aaa\=aaa\=aaa\=aaa\=aaa\=aaa\=aaa\=aaa\=aaa\= \kill
(\arabic{tabnr})~}{\end{tabbing}}
\def\sem #1{\hbox{$[\![\, #1\, ]\!] $}}
\def\boks  {\mbox{$\Box$}}
\def\Nat   {\mbox{$\mathbb{N}$}}
\def\bar   {\mbox{$\,[ \! ]\,$}}
\def\all   {\forall\;}
\def\ex    {\exists\;}
\def\S #1/{\mbox {\textsl{#1}}}
\def\B #1/{\mbox {\textbf{#1}}}
\def\R #1/{\mbox {\textrm{#1}}}
\def\T #1/{\mbox {\texttt{#1}}}
\def\phi   {{\mbox{$\varphi$}}}
\def\Implies{\;\Rightarrow\;}
\def\EQ     {\mbox{\quad$\equiv$\quad}}
\def\Land   {\mbox{ $\;\land\;$ }}
\def\Lor    {\;\lor\;}
\def\TO     {\mbox{$\quad\to\quad$}}
\def\false  {\S false/}
\def\IS     {\mbox{$\quad =\quad $}}
\def\mbreak {\medbreak\noindent}
\begin{document}
\setcounter{tabnr}{-1}
\begin {center} {\Large Partial mutual exclusion for 
infinitely many processes}\\
  \mbox{}\\
  Wim H. Hesselink, \today\\
  Dept. of Computing Science,
  University of Groningen \\
  P.O.Box 407, 9700 AK Groningen, 
  The Netherlands\\
  Email: \verb!w.h.hesselink@rug.nl!\\
\end {center}

\begin{abstract}
  Partial mutual exclusion is the drinking philosophers problem for
  complete graphs. It is the problem that a process may enter a
  critical section \S CS/ of its code only when some finite set \S
  nbh/ of other processes are not in their critical sections. For each
  execution of \S CS/, the set \S nbh/ can be given by the
  environment. We present a starvation free solution of this problem
  in a setting with infinitely many processes, each with finite
  memory, that communicate by asynchronous messages.  The solution has
  the property of first-come first-served, in so far as this can be
  guaranteed by asynchronous messages.  For every execution of \S CS/
  and every process in \S nbh/, between three and six messages are
  needed.  The correctness of the solution is argued with invariants
  and temporal logic. It has been verified with the proof assistant
  PVS.
\end{abstract}

\mbreak Key words: Drinking philosophers; distributed algorithms;
starvation freedom; verification; fairness

\section {Introduction}

Partial mutual exclusion is a problem that goes back to Dijkstra's
dining philosophers \cite{Dij71} and the drinking philosophers of
Chandi and Misra \cite{ChM84}. It is the problem that a process may
enter a critical section of its code only when some specified finite
set \S nbh/ of other processes (neighbours) are not in their critical
sections. In the case of the dining philosophers, the philosophers
form a ring and \S nbh/ consists of the two neighbour philosophers.

The drinking philosophers form an arbitrary finite undirected graph,
say with $\S Nbh/.p$ as set of neighbours of philosopher $p$.  The set
$\S nbh/.p$ is then a modifiable subset of $\S Nbh/.p$, but the
algorithm has a message complexity proportional to the sizes of $\S
Nbh/.p$, and these can be considerably larger than $\S nbh/.p$.

Investigations into dining or drinking philosophers have always been
motivated by the desire for a clean abstraction of the problems of
resource allocation.

Inspired by the emergence of the internet, we generalize the setting
to potentially infinitely many processes and to arbitrary finite sets
\S nbh/ that can change over time. More precisely, when the
environment prompts a process, say $p$, to enter its critical section
\S CS/, it gives $p$ nondeterministically a finite set $\S nbh/.p$ of
other processes (to be called neighbours). After executing \S CS/,
process $p$ resets $\S nbh/.p $ to the empty set when it becomes idle
again.

\emph{Partial mutual exclusion} is the requirement that when two
processes, say $q$ and $r$, are both in each other's neighbourhoods,
they are not both in their critical sections:
\begin{tab}
\S PMX:/ \> $r\in\S nbh/.q \Land q\in\S nbh/.r 
\Land q\B\ in /\S CS/\Land r\B\ in /\S CS/ \Implies \false $ .
\end{tab}
We do not require that $r\in\S nbh/.q$ be equivalent to $q\in\S
nbh/.r$, because this would restrict the environment and the processes
considerably. When $q$ and $r$ are both in each other's
neighbourhoods, we speak of a \emph{conflict} between $q$ and $r$.
For comparison, mutual exclusion itself would be the requirement:
\begin{tab}
 \> $ q\B\ in /\S CS/\Land r\B\ in /\S CS/ \Implies q = r $ .
\end{tab}

We assume that the processes have private memory and communicate by
asynchronous messages. Messages are reliable: they are not lost or
duplicated.  They can pass each other, however. The processes receive
and answer messages even when they are idle.

We impose a \emph{first-come first-served} order (FCFS) in the
following way. Whenever a process starts the entry protocol, it
notifies all its neighbours.  If the notification of process $q$
reaches $r$ before process $r$ starts its entry protocol, and if the
entry of $r$ is in conflict with $q$, then process $q$ reaches \S CS/
before process $r$ does. The classsical definition of FCFS in
\cite{Lam74} only applies to shared memory systems and total mutual
exclusion. The above definition of FCFS seems to be the natural
translation for partial mutual exclusion with message passing.

As a process may have to wait a long time before it can enter the
critical section, it is important that the environment of any process
$p$ is allowed nondeterministically to abort the entry protocol of $p$
and move it back to the idle state.

There are two progress requirements: starvation freedom and maximal
concurrency. \emph{Starvation freedom} (\cite{ewd651}, also called
lockout freedom \cite{Lyn96}) means that, for every $p$, whenever
process $p$ needs to enter \S CS/, it will eventually do so if it is
not aborted.

Starvation freedom can only hold under the assumption of weak fairness
for all processes.  Weak fairness for process $p$ means that if, from
some time onward, process $p$ is continuously enabled to do a certain
step different from entering or aborting, it will do the step.  We
elaborate on the concept of weak fairness in Section \ref{s.liveness}.

The second progress requirement is \emph{maximal concurrency},
introduced in \cite{Rhe98}.  Maximal concurrency means that in any
case (i.e. without weak fairness) a process can make progress when it
has no conflicts with other processes.  More precisely, every process
$p$ that needs to enter \S CS/ and does not abort, will eventually
enter \S CS/, provided it satisfies weak fairness itself, all other
processes receive and answer messages from $p$, and no process comes
in an eternal conflict with $p$. The point here is that processes
without conflicts with $p$ need not be weakly fair; e.g., they are
allowed to remain in \S CS/ forever.

\subsection{Sketch of a solution}

The main reason for waiting is formed by \emph{conflicts}: $q\in\S
nbh/.r$ and $r\in\S nbh/.q$.  The conflict relation determines an
undirected graph.  The performance of any solution depends on the
sizes and the shapes of the connected components of this
graph. Conversely, however, the evolution of this graph depends on the
performance. The sooner a process reaches \S CS/, the sooner its set
\S nbh/ is made empty again.  This justifies the search for a simple
solution with as few messages as possible.

The first step of the solution was explicitly writing down what FCFS
would mean for partial mutual exclusion in a message passing system.
This led us to a solution with two layers: an outer protocol to
guarantee FCFS, and an inner protocol to guarantee partial mutual
exclusion. This is the same global design as used in the
shared-variable mutual exclusion algorithm of \cite{LyH91}.

The outer protocol requires 3 messages for every element of \S nbh/.
The inner protocol is asymmetric. We represent the processes by
natural numbers. If $q$ and $r$ are processes with $q<r$, we speak of
$q$ as the lower process and $r$ as the higher process.  The inner
protocol for process $p$ requires 3 messages for every higher process
in $\S nbh/.p$, and no messages for the lower processes in $\S
nbh/.p$.

\subsection{Restrictions} \label{CRMR}

Although we allow infinitely many processes, we restrict the number of
messages and the memory requirements of the processes in realistic
ways.

For the correctness of the algorithm, the time needed for message
transfer can be unbounded. For other issues, however, it is convenient
to postulate an upper bound $\Delta$ for the time needed to execute an
atomic command plus the time needed for the messages sent in this
command to be received.

We define the \emph{extended neighbourhood} of process $p$ to be the
union of $\S nbh/.p$ with its dual $ \S nbh/^0.p = \{q\mid p\in\S
nbh/.q\}$. The $k$th delayed extended neighbourhood of $p$ consists of
the processes that were in $p$'s extended neighbourhood not longer
than a time $k\Delta$ ago.

As there are many processes, we impose the communication restriction
(CR) that every process only sends messages to the members of some
delayed extended neighbourhood, and the memory restriction (MR) that
every process only remembers relationships with members of some
delayed extended neighbourhood. 

It is fairly easy to see that our solution has these properties, using
the first delayed extended neighbourhood for (CR) and the second one
for (MR).  We do not verify it formally, because this is cumbersome
and not illuminating.

\subsection{Overview and verification}

We briefly discuss related research in Section \ref{relatedresearch}.
In Section \ref{channels}, we describe the modelling of the
asynchronous messages.  Section \ref{algorithm} describes the
algorithm.  In Section \ref{safety}, we prove its safety properties
mutual exclusion and deadlock freedom.  In Section \ref{s.liveness},
we prove the liveness properties starvation freedom and maximal
concurrency. 

The proofs of the safety and liveness properties have been carried out
with the interactive proof assistant PVS \cite{OSR01}.  The
descriptions of proofs closely follow our PVS proof scripts, which can
be found on our web site \cite{HesUrlPartialMX}.  It is our intention
that the paper can be read independently, but the proofs require so
many case distinctions that manual verification is
problematic. Section \ref {PVS_sketch} provides a brief sketch of how
we use PVS. We conclude in Section \ref{conclusion}.

\subsection{Related research} \label{relatedresearch}

We solve the drinking philosophers' problem for an infinite complete
graph, as a clean idealization of the resource allocation problem.  We
do not intend to solve the general resource allocation problem itself,
as formulated in \cite{WeL93,AwS90,Rhe98}.

In the drinking philosophers' problems of \cite{ChM84,WeL93} (also
\cite[Section 20]{Lyn96}), the philosophers form a fixed finite
undirected graph. Our set $\S nbh/.p$ is then a subset of the constant
set $\S Nbh/.p$ of $p$'s neighbours in the graph.  This subset is
chosen by the environment each time that the process gets thirsty. The
message complexity of the solutions of these papers is proportional to
the size of $\S Nbh/.p$ and not to the possibly considerably smaller
size of $\S nbh/.p$.  These solutions are therefore problematic for
large complete graphs. To enforce starvation freedom, these solutions
assign directions to the edges of the graph such that the resulting
digraph is acyclic.

We are not aware of other solutions to the partial mutual exclusion
problem as formulated here.

The papers \cite{WeL93,Rhe98} offer a modular approach to the general
resource allocation problem.  This modular approach seems to
correspond to the outer protocol in our solution, but in either case,
the code is much more complicated than our outer protocol.  The
solution of \cite{AwS90} satisfies FCFS, but it is more complicated
and needs much more messages than our solution.

\subsection{Messages and channels} \label {channels}

As announced, we assume that the processes communicate by asynchronous
messages. Messages are guaranteed to arrive and be received, but
the delay is unknown, and messages are allowed to pass each other,
unlike in \cite{AwS90,Lyn96} where the messages in transit, say from
$q$ to $r$ are treated first-in-first-out.

Formally, we write $m.q.r$ to denote the number of messages $m$ sent
by $q$ to $r$. In particular, it is a natural number. Sending
corresponds to the incrementation $ \ m.q.r\T ++/ \ $.  Receiving
corresponds to decrementation $ \ m.q.r\T --/ \ $ and has the
precondition $m.q.r > 0$.  In principle, $m.q.r$ can be any natural
number, but in our algorithm, we take care to preserve the invariants
$m.q.r \leq 1$: there is always at most one message $m$ in transit
from $q$ to $r$.  Initially, no messages are in transit: $m.q.r=0$ for
every message type $m$.

In CSP \cite{Hoa85}, one would write $ \ m.q.r! \ $ for sending and $
\ m.q.r? \ $ for receiving, but messages in CSP are synchronous. In
Promela, the language of the model checker Spin \cite{Hol04}, one
could model the messages by channels with buffer size 1.

The view of the messages sent through channels must be taken with a
grain of salt, because it stretches the imagination to declare for
every process infinitely many channels.  For implementation purposes,
we prefer to regard a message $m$ sent by $q$ to $r$ as a tuple
$(m,q,r)$. Every process searches the tuple space continuously or
repeatedly for messages with itself as destination.

\section{The Algorithm} \label{algorithm}

The code for each process is decomposed as a parallel composition of 
three component processes: an environment, a forward stepping component, 
and a component that receives the messages:
\begin{tab}
  \> $ \B process/(p): $ \+\+\\
$ \B environ/(p)\;||\;\B forward/(p)\;||\;(||\;q: \B receive/(q, p)\:) $ .
\end{tab}
The component $\B environ/(p)$ implements the environment's decisions
for $p$: to start the entry protocol when it is idle, or to abort the
entry protocol if needful.  In component $\B forward/(p)$, process $p$
traverses the entry protocol towards \S CS/, followed by the exit
protocol back to the idle state.  For every $q$, component $\B
receive/(q, p)$ serves to receive and handle all messages from $q$ to
$p$.

Each component of process $p$ is an infinite loop, the body of which
is a nondeterministic choice between several guarded alternatives as
in Unity \cite{ChM88}.  The guard of an alternative is its enabling
condition. In $\B environ/(p)$ and $\B forward/(p)$, this is primarily
the value of the program counter \S pc/ of the process (called
\emph{line number}). In most cases of $\B receive/(q, p)$, the presence
of a message is the guard for its reception, and the first action is
the removal of the message. 

An important difference between $\B forward/(p)$ and $\B receive/(q,
p)$ as opposed to $\B environ/(p)$, is that the alternatives of $\B
forward/(p)$ and $\B receive/(q, p)$ are executed under weak fairness,
i.e., if in some execution one of its alternatives is for some point
onward continuously enabled this alternative will eventually be
executed. On the other hand, the environment is never forced to act.
We come back to this in Section \ref{s.liveness}.

One can argue that the critical section \S CS/ should be executed by
the environment. We reckon it to \B forward/ instead, because the
environment is allowed to do nothing, while \S CS/ needs to terminate.

\subsection{A layered solution}

If \S v/ is a private variable, outside the code, the value of \S v/
for process $q$ is denoted by $\S v/.q$.  In the algorithm, every
process has private variables \S nbh/, \S prio/, \S before/, \S
after/, \S wack/, \S away/, \S need/, \S prom/, which all hold finite
sets of processes, and which are initially empty. Every process has a
program counter $\S pc/:\Nat$, which is initially 11. The role of the 
ghost variable \S fork/ is explained in Section \ref{inner}.

We use five message types \T req/, \T gra/, \T notify/, \T withdraw/,
and \T ack/. As explained in Section \ref{channels}, $\T req/.q.r$
stands for the number of \T req/ messages from $q$ to $r$, and
analogously for the other message types.

The components of our solution are given in the Figures \ref{env},
\ref{fwd}, \ref{rcv}.  We can regard the alternatives as atomic
commands because actions on private variables give no interference,
the messages are asynchronous, and any delay in sending a message can
be regarded as a delay in message delivery.

\begin{figure}
\begin{tab}
  \> $ \B environ/(p): $ \+\+\\
  $ \bar\quad \S pc/ = 11 \TO \B choose /\S nbh/\B\ with /
  p\notin \S nhb/ \;;\; \S pc/ := 12 $ .\\
  $ \bar\quad \S pc/ = 12 \Land p\in \S AE/
  \TO \S nbh/ := \emptyset\;;\; \S pc/ := 11 $ .\\
  $ \bar\quad \S pc/ = 13 \Land p\in \S AE/ \TO $\\
  \>\> $ \B for /q\in\S nbh/ \B\ do /\T withdraw/.p.q\T ++/\B\ od/ $ ;\\
  \>\> $\S wack/ := \S nbh/ $ ; $\S nbh/ := \emptyset$ ;
  $\S prio/ := \emptyset$ ; $\S pc/ := 11 $ .\\
  $ \bar\quad \S pc/ = 14 \Land
  \S need/\cap\{q\mid p < q\} = \emptyset \Land p\in \S AE/ \TO$ \\
  \>\> $ \B for /q\in\S nbh/ \B\ with / p < q \B\ do / $\\
  \>\>\> $ \T gra/.p.q \T ++/\;;\; \S fork/.p.q\T --/\ \B\ od/$ ;\\
  \>\> $ \B for /q\in\S nbh/ \B\ do /\T withdraw/.p.q\T ++/\B\ od/ $ ;\\
  \>\> $\S wack/ := \S nbh/ $ ; $ \S need/ := \emptyset $ ; 
  $ \S nbh/ := \emptyset$ ; $ \S pc/ := 11 $ .
\end{tab}
\caption{The environment, to trigger and abort}\label{env}
\end{figure}

A process $p$ is called \S idle/ when $\S pc/.p=11$. When $p$ is idle,
the environment may decide to trigger the process by giving it a
finite set $\S nbh/.p$ with $p\notin \S nbh/.p$, and setting $\S pc/.p
:= 12$. See line 11 of \B environ/.  As $\S nbh/.p$ will only be
modified again when process $p$ becomes idle again, we postulate the
invariant
\begin{tab}
\S Iq0:/ \> $ r \in \S nbh/.q \Implies q\ne r \Land q\B\ in /\{12\dots\} $ .
\end{tab}
For a process $q$ and a line number $\ell$, we write $q\B\ at / \ell$
to express $\S pc/.q = \ell$. If $L$ is a set of line numbers, we
write $q\B\ in / L$ to express $\S pc/.q \in L$.  For all invariants,
we implicitly universally quantify over the free variables, usually
$q$ and $r$.

At line 12, the process starts the entry protocol in the component \B
forward/. When waiting at lines 12, 13, or 14 takes too long, the
environment may be allowed to abort \B forward/ and to go back to the
idle state. The fixed unspecified set \S AE/ (abort enabled) is
introduced to make it impossible that the aborting steps of the
environment are used to prove progress properties (by accident or
design).

As announced, the solution consists of two layers: an outer protocol
for FCFS, and an inner protocol for mutual exclusion. The partition in
layers is orthogonal to the partition in components. In $\B
forward/(p)$, the outer protocol is visible in line 12, in the guard
of line 13, and in the body of line 14.  The inner protocol is visible
in the body of line 13, in the guard of line 14, and in line 16 as a
whole.

\begin{figure}
\begin{tab}
  \> $ \B forward/(p): $ \+\+\\
  $ \bar\quad \S pc/ = 12 \Land \S wack/ = \emptyset \TO $ \\
  \>\> $ \B for /q\in\S nbh/ \B\ do /\T notify/.p.q\T ++/\B\ od/ $ ;\\
  \>\> $ \S prio/ := \S nbh/\cap(\S before/\setminus\S after/) $ ; 
  $ \S pc/ := 13 $ .\\

  $ \bar\quad \S pc/ = 13 \Land \S prio/ = \emptyset \TO $ \\
  \>\> $ \B for /q\in\S nbh/\B\ with /p < q
  \B\ do /\T req/.p.q\T ++/\B\ od/ $ ;\\
  \>\> $ \S need/:= \{q\in\S nbh/\mid p < q \Lor q\in \S away/\} $ ; 
  $ \S pc/ := 14 $ .\\
  
  $ \bar\quad \S pc/ = 14 \Land \S need/ = \emptyset \TO $ \\
  \>\> $ \B for /q\in\S nbh/ \B\ do /\T withdraw/.p.q\T ++/\B\ od/ $ ;\\
  \>\> $\S wack/ := \S nbh/ $ ; $ \S pc/ := 15   $ .\\
  $ \bar\quad \S pc/ = 15 \TO \S CS/ $ ; $ \S pc/ := 16 $ .\\

  $ \bar\quad \S pc/ = 16 \TO $ \\
  \>\> $ \B for /q\in\S nbh/ \B\ with / p < q \B\ do / 
  \T gra/.p.q \T ++/\;;\; \S fork/.p.q\T --/\ \B\ od/$ ;\\
  \>\> $ \S nbh/ := \emptyset $ ; $ \S pc/ := 11 $ .
\end{tab}
\caption{The stepping component}\label{fwd}
\end{figure}

\subsection{The outer protocol for FCFS}

In order to guarantee FCFS, every process $q$ maintains a set $\S
before/.q$ of the processes that have sent notifications to $q$
without withdrawing them.  Indeed, when a process $p$ has sent
notifications, it needs to withdraw them when it enters \S CS/, so
that $p$ does not force other processes needlessly to wait when it is
in its exit protocol or idle again.

Because the messages are asynchronous and not necessarily FIFO, the
message \T withdraw/ can arrive earlier than \T notify/.  We therefore
treat the messages \T notify/ and \T withdraw/ symmetrically. Arrival
of \T withdraw/ is registered in \S after/.  When both have arrived,
the combination is answered by a message \T ack/, to preclude
interference when \T notify/ or \T withdraw/ would be delayed. Each
process holds in \S wack/ the set of processes it is expecting
aknowledgements from. See the first four alternatives of $\B
receive/(q,p)$. Note that idle processes also accept messages.

According to the definition of FCFS, when process $p$ starts its entry
protocol in $\B forward/(p)$, it sends messages \T notify/ to its
neighbours, it forms a set $\S prio/.p$ as the intersection of $\S
nbh/.p$ with the difference between $\S before/.p$ and $\S after/.p$,
and then waits for the set $\S prio/.p$ to become empty. 

\begin{figure}[t]
\begin{tab}
  \> $ \B receive/(q, p): $ \+\+\\

  $ \bar\quad \T notify/.q.p > 0 \TO \T notify/.q.p \T --/ $ ;
  $ \B add /q\B\ to /\S before/ $ . \\

  $ \bar\quad \T withdraw/.q.p > 0 \TO \T withdraw/.q.p \T --/ $ ;\\ 
  \>\> $ \B remove /q \B\ from /\S prio/ $ ;
  $ \B add /q \B\ to /\S after/ $ .\\  

  $ \bar\quad q\in\S after/\cap\S before/ \TO $\\
  \>\> $ \B remove /q\B\ from /\S after/ 
  \B\ and /\S before/ $ ; $ \T ack/.p.q \T ++/ $ .\\

  $ \bar\quad \T ack/.q.p > 0 \TO \T ack/.q.p \T --/ $ ;
  $ \B remove /q\B\ from /\S wack/ $ . \\

  $ \bar\quad \T req/.q.p > 0 \TO \T req/.q.p \T --/ $ ;
  $ \B add /q \B\ to / \S prom/ $ .\\

  $ \bar\quad \T gra/.q.p > 0 \TO \T gra/.q.p\T --/ $ ;
  $ \S fork/.p.q \T ++/ $ ;\\
  \>\> $ \B remove /q \B\ from /\S away/\B\ and /\S need/ $ .\\

  $ \bar\quad q\in\S prom/\setminus\S away/\Land \neg\,
  (\S pc/ \geq 15 \Land q\in\S nbh/ ) \TO $\\
  \>\> $ \T gra/.p.q \T ++/\;;\;\S fork/.p.q \T --/$ ;\\
  \>\> $ \B add / q \B\ to / \S away/ $ ;
  $\B remove / q \B\ from / \S prom/ $ ;\\
  \>\> $ \B if /\S pc/ = 14 \Land q\in\S nbh/
  \B\ then / \B add / q \B\ to /\S need/ \B\ endif/ $ .

\end{tab}
\caption{The component of $p$ receiving messages from $q$}\label{rcv}
\end{figure}

When process $p$ enters \S CS/, it withdraws all its outstanding
notifications by sending \T withdraw/ messages to its neighbours,
because at that point it cannot be overtaken anymore.  It also sets
$\S wack/:=\S nbh/$.  When it arrives again at line 12, it waits for
\S wack/ to be empty. In this way, it verifies that all its messages
\T notify/ and \T withdraw/ have been acknowledged.

The third alternative of \B receive/ is called \T after/ because the
condition $q\in\S after/.p$ is its usual trigger.  This alternative can
be eliminated by including it conditionally in both the first and the
second alternative. We have not done so, because it would complicate
the code.  It may seem to be simpler to require separate
acknowledgements for \T notify/ and \T withdraw/. This, however, would
require more messages and more waiting conditions.

In early drafts of the algorithm, we had located the waiting for the
acks at line 16, as in Szymanski's algorithm \cite{Szy88}, but we have
rotated it to line 12 to avoid unnecessary waiting.

The outer protocol thus uses the messages \T notify/, \T withdraw/, \T
ack/, and the private variables \S prio/, \S before/, \S after/, and
\S wack/.

\subsection{The inner protocol} \label{inner}

The inner protocol serves to ensure partial mutual exclusion. It is
inspired by the drinking philosophers of \cite{ChM84} but we do not
insist on a symmetric solution. The idea is that the processes form a
modifiable directed complete graph, which need not remain acyclic.
For every edge, say between processes $q$ and $r$, the direction is
determined by a ``fork'' (or ``bottle'') that is held either by $q$ or
by $r$, or that is in transit between them. We use an integer variable
$\S fork/.q.r$ (private to $q$) to count the number of forks to $r$
that $q$ holds. 

In accordance with \cite{Dij71,ChM84}, we thus postulate the
invariants
\begin{tab}
\S Iq1:/ \> $ q \ne r \Implies\S fork/.q.r + \S fork/.r.q \leq 1 $ ,\\
\S Iq2:/ \> $ q \B\ in /\{15\dots\} \Land r\in\S nbh/.q \Implies 
\S fork/.q.r > 0$ .
\end{tab}
As \S CS/ is in line 15, condition \S PMX/ is clearly implied by \S
Iq0/, \S Iq1/, and \S Iq2/.  The values of $\S fork/.q.q$ are
irrelevant.

The basic idea of the inner protocol is that when process $p$ needs to
enter \S CS/, it sends request messages \T req/ to some neighbours $q$
for which it misses the fork, i.e., with $\S fork/.p.q = 0$. When it
has all forks needed, it can enter \S CS/. When process $p$ receives a
request for a fork that it does not need, it grants it by sending a \T
gra/ message.

At this point, we break the symmetry between processes, in two
ways. Recall that we represent the processes by natural numbers, and
that, if $q<r$, we say that process $q$ is \emph{lower} and that $r$
is \emph{higher}.  We decide to give priority to the lower process. It
follows that in the inner protocol, the lowest process waiting for
forks can make progress.

In view of the memory restriction (MR) of Section \ref{CRMR}, we
cannot use the infinite array $\S fork/.p$ as an actual private
variable of $p$. We therefore treat \S fork/ as a ghost variable,
which can be eliminated from the algorithm but only serves in its
description and its proof.  

We eliminate the variable \S fork/ from the algorithm by introducing
private variables to express how $\S fork/.p$ differs from a default
fork distribution to be introduced next. As the processes are
represented by natural numbers, there are two natural candidates for a
default fork distribution. Because the lower process has priority and
must therefore request the fork whenever it needs it, we locate the
fork by default at the higher process. The \emph{default fork
  distribution} therefore has $\S fork/.q.r = |r < q|$, where we use
$|b|=(b\,?\, 1: 0)$ for Boolean $b$. After every transaction this
state of affairs is restored as much as possible.

\begin{remark}
  It is possible to choose the alternative default fork distribution
  with $\S fork/.q.r = |q < r|$.  We come back to this in Section
  \ref{conclusion}. \boks
\end{remark}

\medbreak
The set of forks missing from the default distribution
for process $p$ is registered in the private variable $\S away/.p$. 
We thus postulate the invariant:
\begin{tab}
\S Iq3:/ \> $ q\in \S away/.r \EQ (q < r \Land \S fork/.r.q = 0) $ .
\end{tab}
Forks that are present despite the default fork distribution, are
present because the process asked for them. They are recorded in the
set difference $\S nbh/\setminus\S need/$.

In view of the default fork distribution, a process $p$ that needs
forks in line 13 sends requests \T req/ only to the higher neighbours
$q$. If process $q$ receives the request, it puts $p$ in its private
variable $\S prom/.q$ (promise).  When $p\in\S prom/.q$ and $q$ has
the fork, and is not currently using it, process $q$ sends the fork by
a message \T gra/, and updates its administration by adapting \S fork/
and \S away/. This last alternative of \B receive/ is called \T prom/.
It can be eliminated by including it conditionally in the alternatives
\T req/ and \T gra/, and in line 16.

When process $p$ receives a fork by a \T gra/ message, it accepts the
fork and updates the administration.  At line 16, process $p$ sends
the forks it has used back to its higher neighbours in accordance to
the default fork distribution.

To summarize, the inner protocol uses the messages \T req/ and \T
gra/, the private variables \S away/, \S need/, \S prom/, and the ghost
variables \S fork/.  Initially, the ghost variables \S fork/ satisfy
the default fork distribution $\S fork/.q.r = |r < q|$.

\subsection{Informal correctness arguments}

We postpone the full and formal proofs of correctness to Section
\ref{safety} for safety, and to Section \ref{s.liveness} for liveness.
Here, we only give some indications.

The proof that the inner protocol guarantees partial mutual exclusion
(\S PMX/) is a matter of careful fork administration. This is
relatively easy. The formal treatment is in Section \ref{mx_proof}.

Absence of deadlock is more complicated. There are three waiting
conditions at the lines 12, 13, and 14 of \B forward/, which each or
in combination potentially could lead to deadlock. The waiting
condition at line 12 is harmless, however, because it is just waiting
for messages to arrive. The waiting condition at line 13 does not lead
to deadlock, essentially because the process(es) waiting longest at
line 13 can proceed to line 14. The waiting condition at line 14 does
not lead to deadlock because the lowest process waiting at line 14 has
priority.  The formal treatment of deadlock is in Section
\ref{imm_dead}.

Starvation freedom is the most complicated property. The inner
protocol itself is not starvation free, because a lower process can
claim priority over a higher process. Note, however, that when it does
so, it will send the fork to the higher process in its exit protocol.
The outer protocol is starvation free because it satisfies FCFS. When
a process in the inner protocol is passed by another process, this
other process cannot pass again because it will be blocked by the FCFS
property of the outer protocol. As the number of processes in the
inner protocol is finite, it follows that the combination of the
protocols is starvation free.  The formal treatment is in Section
\ref{s.liveness}.

\subsection {Message complexity and waiting times} 
\label{summ_alg}

In the outer protocol, process $p$ exchanges 3 messages (\T notify/,
\T withdraw/, \T ack/) with every neighbour. In the inner protocol, it
exchanges 3 messages (\T req/, \T gra/, \T gra/) with every higher
neighbour. In total it exchanges between 3 and 6 messages with every
neighbour.

Component \B forward/ has 3 waiting conditions: emptiness of \S prio/
at line 13 to ensure FCFS, emptiness of \S need/ at line 14 to ensure
mutual exclusion, and emptiness of \S wack/ at line 12 to preclude
interference of delayed messages.

Recall from Section \ref{CRMR} that $\Delta$ serves as an upper bound
of the time needed for the execution of an alternative, plus the time
needed for reception of the messages sent.  Therefore, waiting for
emptiness of \S wack/ should not take more than $2\Delta$.

When the environment of $p$ wants to abort the entry protocol at line
14, it may need to wait for emptiness of the higher part of \S
need/. This waiting is also short because process $p$ has priority
over its higher neighbours. If $\Gamma$ is an upper bound for the
execution time of \S CS/, the higher part of \S need/ is empty after
$\Gamma + 2\Delta$.

The important waiting conditions are therefore emptiness of \S prio/
at line 13 and emptiness of \S need/ at line 14.  The first condition
is unavoidable and completely determined by FCFS. The waiting time for
emptiness of \S need/ at line 14 depends on the number of conflicting
processes that are concurrently in the inner protocol. The outer
protocol guarantees that conflicting processes do not enter the inner
protocol concurrently unless they are activated by the environment
within a period $\Delta$.  If the environment often activates several
conflicting processes within periods $\Delta$, our algorithm may have
performance problems. It seems likely, however, that other algorithms
would have the same problem.

\section{Verification of Safety} \label{safety}

In a distributed algorithm, at any moment, many processes are able to
do a step that modifies the global state of the system. In our view,
the only way to reason successfully about such a system is to analyse
the properties that cannot be falsified by any step of the system. These 
are the invariants. 

Formally, a predicate is called an \emph{invariant} of an algorithm if
it holds in all reachable states.  A predicate $J$ is called
\emph{inductive} if it holds initially and every step of the algorithm
from a state that satisfies $J$ results in a state that also satisfies
$J$.  Every inductive predicate is an invariant. Every predicate
implied by an invariant is an invariant.

When a predicate is inductive, this is often easily verified. In many
cases, the proof assistant PVS is able to do it without user
intervention. It always requires a big case distinction, because the
algorithm has 16 different alternatives in the Figures \ref{env},
\ref{fwd}, and \ref{rcv}.  

Most invariants, however, are not inductive.  Preservation of such a
predicate by some alternatives needs the validity of other invariants
in the precondition. We use PVS to pin down the problematic
alternatives, but human intelligence is needed to determine the useful
other invariants.

In proofs of invariants, we therefore use the phrase
``\emph{preservation of $J$ at $\ell_1\dots \ell_m$ follows from
  $J_1\dots J_n$}'' to express that every step of the algorithm with
precondition $J\land J_1\dots J_n$ has the postcondition $J$, and that
the additional predicates $J_1\dots J_n$ are only needed for the
alternatives $\ell_1\dots \ell_m$.  We indicate the alternatives of
Figure \ref{env} by \T env11/, \T env12/, \T env13/, \T env14/. The
alternatives of Figure \ref{fwd} are indicated by the line
numbers. The alternatives of Figure \ref{rcv} in which messages are
received, are indicated by the message names \T notify/, \T withdraw/,
\T ack/, \T req/, \T gra/. The alternatives 3 and 7 of \B receive/ are
indicated by \T after/ and \T prom/, respectively.

The \emph{follows from} relation makes the list of invariants into a
directed graph. In our enumerations of invariants, we traverse this
graph by breadth first search.

For all invariants postulated, the easy proof that they hold initially
is left to the reader. We use the term invariant in a premature
way. It will be justified at the end of the section.  

Section \ref{mx_proof} contains the proof that the algorithm satisfies
the invariant \S PMX/ of partial mutual exclusion.  Section
\ref{imm_dead} contains the proof of absence of deadlock.  This proof
uses invariants that are verified in Section \ref{inv_dead}.  

\subsection{The proof of mutual exclusion} \label{mx_proof}

In Section \ref{inner}, we saw that the mutual exclusion predicate \S
PMX/ is implied by \S Iq0/, \S Iq1/, and \S Iq2/.  This section
contains the proof that \S Iq0/, \S Iq1/, \S Iq2/, and \S Iq3/ of
Section \ref{inner} are invariants.  Firstly, it is easy to verify
that \S Iq0/ is inductive. Predicate \S Iq1/ is implied by the
observation that there is precisely one fork on every edge, as
expressed by the invariant:
\begin{tab}
\S Iq4:/ \> $ q \ne r \Implies 
\T fork/.q.r + \T fork/.r.q + \T gra/.q.r + \T gra/.r.q = 1 $ .
\end{tab}
Indeed, \S Iq4/ implies \S Iq1/ because \T gra/ holds
natural numbers.

Predicate \S Iq2/ is implied by the invariants:

\begin{tab}
\S Iq5:/ \> $ q\B\ in /\{14\dots\} \Land r\in\S nbh/.q
\Implies r\in\S need/.q \Lor \S fork/.q.r > 0 $ ,\\
\S Iq6:/ \> $ r\in\S need/.q \Implies q\B\ at / 14\Land r\in\S nbh/.q $ .
\end{tab}

The invariant \S Iq3/ is a matter of careful fork administration.
Preservation of \S Iq3/ at 16, \T gra/, and \T prom/ follows from \S
Iq2/, \S Iq4/, and the new invariants
\begin{tab}
  \S Iq7:/ \> $ \S fork/.q.r \geq 0$ ,\\
  \S Iq8:/ \> $ q\in\S prom/.r \Implies q < r $ .
\end{tab}
Note that \S Iq7/ is not superfluous, because, in the algorithm, we
unconditionally decrement $\S fork/.q.r$ in the alternatives 16 and \T
prom/. On the other hand, we treat the message variables $m.q.r$ as
natural numbers, because they are only decremented when positive.

Predicate \S Iq4/ is inductive.  Preservation of \S Iq5/ at 13 and \T
gra/ follows from \S Iq0/, \S Iq3/, and \S Iq7/.  Predicate \S Iq6/ is
inductive. Preservation of \S Iq7/ at 16, \T prom/, and \T env14/
follows from \S Iq2/, \S Iq3/, and \S Iq8/.  

Preservation of \S Iq8/ at \T req/ follows from the new invariant
\begin{tab}
  \S Iq9:/ \> $ \T req/.q.r > 0 \Implies q < r $ .
\end{tab}
Preservation of \S Iq9/ at 14 follows from \S Iq8/.

It now follows that the conjunction of the universal quantification of
the ``invariants'' introduced above is inductive.  Therefore, each of
them is itself invariant. In particular, the mutual exclusion
predicate \S PMX/ is invariant.  This concludes the proof that the
algorithm satisfies \S PMX/.

\subsection {Absence of deadlock} \label{imm_dead}

We define a state to be \emph{silent} when no process can do a step of
\B forward/ or \B receive/ of Figures \ref{fwd}, \ref{rcv}.  We define
a state to be \emph{in deadlock} when it is silent and some processes
are not idle (not at 11).  Note that the environment need not be
disabled.  Absence of deadlock is a safety requirement which will
follow from the liveness requirement of starvation freedom. It is
useful to prove absence of deadlock first, however, because the
ingredients of this proof are bound to enter again in the more
complicated proof of liveness.

For the proof of absence of deadlock, we need several additional
invariants.  Firstly, all processes are at the line numbers of the
program (otherwise there are no steps). This amounts to the inductive
invariant:
\begin{tab}
\S Jq0:/ \> $ q \B\ in /\{11\dots 16\}$ .
\end{tab}
Fork requests by the lower process are remembered as promises:
\begin{tab}
\S Jq1:/ \> $ q < r \Land r\in\S need/.q \Land \T req/.q.r = 
\T gra/.r.q = 0 \Implies q\in\S prom/.r $ .
\end{tab}
Predicate \S Jq1/ is invariant, because preservation at \T prom/ follows 
from \S Iq8/.

The following invariants are more difficult. We only claim them here,
and postpone the proofs to the next section.

Any process waiting for a fork, does not have it:
\begin{tab}
\S Jq2:/ \> $ r\in\S need/.q \Implies \S fork/.q.r = 0 $ .
\end{tab}
Because of the default fork distribution, a lower process has and gets
no fork unless it needs one:
\begin{tab}
\S Jq3:/ \> $ q < r \Land \S fork/.q.r +\T gra/.r.q > 0
\Implies q\B\ in /\{14\dots\} \Land r\in\S nbh/.q $ . 
\end{tab}
When process $q$ has sent \T withdraw/ to $r$, it expects an
acknowledgement from $r$.  It remembers this by putting $r$ in its set
\S wack/. This is expressed in the inductive invariant:
\begin{tab}
\S Kq0:/ \> $ \T withdraw/.q.r + \T ack/.r.q + |q\in\S after/.r|
= | r \in \S wack/.q | $ .
\end{tab}
When $q$ is in $\S after/.r$ and not in $\S before/.q$, a \T notify/
message is in transit from $q$ to $r$:
\begin{tab}
\S Kq1:/ \> $ q\in\S after/.r \Implies 
\T notify/.q.r > 0\Lor q\in\S before/.r $ .
\end{tab}

A binary relation $R$ is called \emph{well-founded}, if every nonempty
set $S$ has an $R$-minimal element, i.e., an element $q\in S$ such
that, for all $q'\in S$, we have $(q',q)\notin R$. 

We now claim the invariant that the relation $\S Prio/= \{(q, r)\mid
q\in\S prio/.r\}$ on the processes is always well-founded:
\begin{tab}
\S Lq0:/ \> $ \S well-founded/(\S Prio/) $ . 
\end{tab}
We also need: 
\begin{tab}
\S Lq1:/ \> $ q\in \S prio/.r \Land \T withdraw/.q.r = 0 \Implies
q \B\ in / \{13 \dots\}$ .
\end{tab}

Under assumption of these invariants, we can prove:

\begin{theorem} \label{thm-no-deadlock}
The system is deadlock free.
\end{theorem}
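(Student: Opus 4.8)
The plan is to argue by contradiction: assume the system is in a deadlock state, i.e.\ it is silent (no step of \B forward/ or \B receive/ is enabled) and some process is not idle. By \S Jq0/, every non-idle process sits at one of the lines $12,\dots,16$. First I would dispose of the ``easy'' lines. A process at line 15 can always execute \S CS/, so no process is at 15. A process at line 16 can always fire, so none is at 16. This leaves processes at lines 12, 13, 14. Next I would argue that nobody is stuck at line 12: the guard there is $\S wack/=\emptyset$, and by \S Kq0/ each element $q\in\S wack/.r$ witnesses either a \T withdraw/ message in transit, an \T ack/ in transit, or $q\in\S after/.r$ — in each of these three cases \B receive/ has an enabled alternative (\T withdraw/, \T ack/, or \T after/ respectively), contradicting silence. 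Hence every non-idle process is at line 13 or line 14, and by assumption at least one such process exists.

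The heart of the proof is then to derive a contradiction from the existence of processes waiting at lines 13 and 14. For line 13 I would use well-foundedness of \S Prio/ (\S Lq0/): among the processes at line 13 whose \S prio/ set is nonempty, pick one, say $r$, that is \S Prio/-minimal over the union of all the \S prio/ sets of line-13 processes. Take any $q\in\S prio/.r$. If $\T withdraw/.q.r>0$ then \B receive/ can execute its \T withdraw/ alternative — contradiction; so $\T withdraw/.q.r=0$, and by \S Lq1/, $q\B\ in /\{13\dots\}$. Moreover $q\notin$ the line-13-with-nonempty-\S prio/ set by minimality (since $q\in\S prio/.r$ would then contradict that $r$ is minimal among their combined \S prio/-elements — here I need to be careful that $q$ either has empty \S prio/ or is at line $\ge 14$), so $q$ is available to move off line 13 (or is already past it), which should let the \T after/ / \T ack/ bookkeeping, or $q$'s own progress, eventually empty $\S prio/.r$ — but in a *silent* state this is a contradiction because the relevant enabling condition holds now. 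So no process is stuck at line 13 with nonempty \S prio/, i.e.\ every line-13 process actually has $\S prio/=\emptyset$ and can fire — contradiction, unless there are no line-13 processes at all.

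That forces all non-idle processes to line 14. Here I would invoke the priority of the lower process. Among the processes at line 14, pick the numerically \emph{smallest} one, say $q$. By the guard, $q$ is blocked only if $\S need/.q\ne\emptyset$; take $r\in\S need/.q$. By \S Iq6/, $r\in\S nbh/.q$, and by minimality of $q$ among line-14 processes together with \S Iq9/ (only lower-to-higher requests) and \S Jq3/ (a lower process holds a fork only while in $\{14\dots\}$), I would show $q<r$ and $\T req/.q.r=\T gra/.r.q=0$; then \S Jq1/ gives $q\in\S prom/.r$. Now I examine the \T prom/ alternative of $\B receive/(q,r)$: its guard is $q\in\S prom/.r\setminus\S away/.r \Land \neg(\S pc/.r\ge 15\land q\in\S nbh/.r)$. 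Since $q$ is the smallest non-idle process and $r>q$, either $r$ is idle or $r$ is at line 14 with $r>q$; in particular $\S pc/.r<15$, so the second conjunct holds; and $q\notin\S away/.r$ because by \S Iq3/ that would need $r<q$. Hence the \T prom/ alternative is enabled at $r$ — contradicting silence. This is the step I expect to be the main obstacle: getting the fork-administration invariants (\S Iq3/, \S Jq1/, \S Jq2/, \S Jq3/, \S Iq9/) to line up so that the \T prom/ guard is demonstrably enabled, in particular ruling out the escape clause $\S pc/.r\ge 15\land q\in\S nbh/.r$ and the clause $q\in\S away/.r$. Once that is done, every case is contradictory, so no deadlock state exists.
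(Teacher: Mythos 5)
Your overall strategy is the paper's: assume a silent state, use \S Jq0/ to localize the non-idle processes, clear line 12 via \S Kq0/, handle line 13 with the well-foundedness of \S Prio/ (\S Lq0/, \S Lq1/), and handle line 14 by taking the lowest waiting process and showing that the \T prom/ alternative of \B receive/ is enabled. The one genuine gap is the order in which you treat lines 13 and 14, and it is not cosmetic. In your line-13 step, the \S Prio/-minimal process $r$ has some $q\in\S prio/.r$ with $q\B\ in /\{13\dots\}$ by \S Lq1/; minimality only tells you that $q$ is not one of the line-13 processes, so $q$ may be sitting at line 14, blocked on a nonempty \S need/ set --- and then nothing is enabled ``now'' and no contradiction follows. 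Your sentence ``which should let \dots $q$'s own progress eventually empty $\S prio/.r$ --- but in a silent state this is a contradiction because the relevant enabling condition holds now'' is exactly where the argument breaks: no enabling condition has been exhibited in that subcase, and ``eventually'' has no force in a static silence argument. The paper resolves this by doing line 14 \emph{first}. Its line-14 argument is self-contained: for the lowest process $p$ at 14 with $q\in\S need/.p$, the subcase $q<p$ is excluded by \S Jq2/, \S Iq4/ and \S Jq3/ (which would force $q$ to be at 14 below $p$), and in the subcase $p<q$ the escape clause $\S pc/.q \geq 15$ of the \T prom/ guard is impossible simply because lines 15 and 16 are unconditionally enabled --- no information about line 13 is needed. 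Once line 14 is empty, your line-13 argument closes cleanly, since $q\B\ in /\{13\dots\}$ then forces $q$ to be at 13, so $(q,r)\in\S Prio/$ contradicts minimality.

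Two smaller slips. At line 12, the case $q\in\S after/.r$ of \S Kq0/ does not by itself enable the \T after/ alternative, whose guard is $q\in\S after/.r\cap\S before/.r$; you need \S Kq1/ (plus silence of \T notify/) to obtain $q\in\S before/.r$. And your stated reason for $q\notin\S away/.r$ is wrong: \S Iq3/ requires $q<r$ \emph{and} $\S fork/.r.q=0$, and $q<r$ does hold in your configuration --- it is the second conjunct that fails, because $\S fork/.r.q=1$ by \S Jq2/ and \S Iq4/ in the silent state.
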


\begin{proof}
  Assume that the state is silent.  The idle processes are those at
  11. Because of \S Jq0/, the nonidle processes are at line 12 waiting
  for emptiness of \S wack/, at line 13 waiting for emptiness of \S
  prio/, or at line 14 waiting for emptiness of \S need/.  We have to
  prove that all processes are idle.

  Because the state is silent, there are no messages \T req/, \T gra/,
  \T notify/, \T withdraw/, and \T ack/.  By \S Iq4/, it follows that
  $\S fork/.r.q = 1 - \S fork/.q.r$ for all pairs $q\ne r$.  By \S
  Kq1/, the set $\S after/.r$ is a subset of $\S before/.r$ for all
  $r$.  As the alternative \S after/ is disabled for all $r$, it
  follows that $\S after/.r$ is empty for all processes $r$.  By \S
  Kq0/, it follows that $\S wack/.q$ is empty for all processes
  $q$. In particular, no process is disabled at line 12, and all
  processes are at the lines 11, 13, and 14.

  First assume that there are processes at 14.  Let $p$ be the lowest
  process at 14. As $p$ is disabled at line 14, the set $\S need/.p$
  is nonempty, say $q\in\S need/.p$.  By \S Jq2/, we have $\S
  fork/.p.q = 0$ and hence $\S fork/.q.p=1$. If $q < p$, then $q$ is
  at 11 or 13, contradicting \S Jq3/. Therefore $p < q $ by \S Iq6/
  and \S Iq0/. As there are no \T req/ or \T gra/ messages in transit,
  it follows that $p\in\S prom/.q$ by \S Jq1/. By \S Iq3/, we have
  $p\notin \S away/.q$.  As the alternative \S prom/ is disabled, it
  follows that $\S pc/.q \geq 15$, a contradiction. This proves that
  there are no processes at 14.

  We thus have that all disabled processes are at 13. Let $S$ be the
  set of processes at 13. If $S$ is empty, we are done. Therefore,
  assume that $S$ is nonempty. By \S Lq0/, the set $S$ has a \S
  Prio/-minimal element, say $p$. As $p$ is disabled at 13, the set
  $\S prio/.p$ is nonempty, say $q\in\S prio/.p$. As there are no \T
  withdraw/ messages, the invariant \S Lq1/ implies that $q$ is in
  $\{13\dots\}$. As all processes are at 11 or 13, this implies that
  $q\in S$ and $(q, p)\in\S Prio/$, contradicting the minimality of
  $p$.  Consequently, there are also no processes at 13. \boks
\end{proof}

\subsection {Invariants against deadlock} 
\label{inv_dead}

In this section we prove the invariants \S Jq2/, \S Jq3/, \S Kq0/, \S
Kq1/, \S Lq0/, \S Lq1/, postulated in the previous section. In the
course of this proof we need to postulate and prove several other
invariants. The reader may prefer not to verify these proofs, because
we obtained and verified them interactively with the proof assistant
PVS. What is relevant is to see the kind of assertions that can be
made and their logical relationships.

Preservation of \S Jq2/ at 13, 16 and \T prom/ follows from \S Iq2/, \S
Iq3/, \S Iq4/, \S Iq7/, \S Iq8/, and \S Jq3/. Preservation of \S Jq3/
follows at 16, \T prom/, \T env14/ from \S Iq4/, \S Iq6/, \S
Iq7/, and the invariant
\begin{tab}
  \S Jq4:/ \> $ q\in\S prom/.r\Implies r\in \S need/.q
  \Land \T req/.q.r = \T gra/.r.q = 0$ .
\end{tab}
Preservation of \S Jq4/ at 13, 16, \T req/, \T env14/ follows from \S
Iq6/, \S Iq8/, \S Jq5/ and the invariants
\begin{tab}
  \S Jq5:/ \> $ \T req/.q.r > 0 \Implies r\in \S need/.q \Land \T
  gra/.r.q = 0$ ,\\
\S Jq6:/ \> $ \T req/.q.r \leq 1 $ .
\end{tab}
Preservation of \S Jq5/ at 13, 16, \T prom/, \T env14/ follows
from \S Iq6/, \S Iq7/, \S Iq9/, \S Jq3/, and \S Jq4/.
Preservation of \S Jq6/ at 13 follows from \S Iq6/ and \S Jq5/. Note
that \S Jq4/ together with \S Iq8/ imply that the implication in \S
Jq1/ can be replaced by an equivalence.

Preservation of \S Kq0/ at 14, \T env13/, \T env14/ follows from the 
inductive invariant 
\begin{tab}
\S Kq2:/ \> $ q\B\ in /\{13, 14\} \Implies \S wack/.q = \emptyset $ .
\end{tab}

Preservation of \S Kq1/ at \T withdraw/ follows from the invariant
\begin{tab}
\S Kq3:/ \> $ \T withdraw/.q.r > 0 \Land \T notify/.q.r = 0
\Implies q\in\S before/.r $ .
\end{tab}

Preservation of \S Kq3/ at 14, \T env13/, \T env14/, and \T after/
follows from \S Kq0/ and the new invariant
\begin{tab}
\S Kq4:/ \> $ q\B\ in /\{13, 14\} \Land r\in\S nbh/.q\Land
\T notify/.q.r = 0 \Implies q\in\S before/.r $ .
\end{tab}
Preservation of \S Kq4/ at \T after/ follows from \S Kq0/ and \S Kq2/.

In order to prove \S Lq1/, we postulate the new invariant
\begin{tab}
\S Kq5:/ \> $ q\in\S before/.r \setminus\S after/.r 
\Land \T withdraw/.q.r = 0 $\\
\> $ \Implies q\B\ in /\{13, 14\} \Land r\in\S nbh/.q $ .
\end{tab}
Preservation of \S Kq5/ at \T notify/ follows from the new invariant:
\begin{tab}
\S Kq6:/ \> $ \T notify/.q.r > 0 
\Land q\notin \S after/.r \Land \T withdraw/.q.r = 0 $ \\
\> $ \Implies q\B\ in /\{13, 14\} \Land r\in\S nbh/.q $ .
\end{tab}
Preservation of \S Kq6/ at 14, \T env13/, \T env14/, \T after/ follows
from \S Kq0/ and the new invariant
\begin{tab}
\S Kq7:/ \> $ q\in\S before/.r \Implies \T notify/.q.r = 0 $ .
\end{tab}
Preservation of \S Kq7/ at 12 and \T notify/ follows from \S Kq0/, 
\S Kq5/, and the new invariant:
\begin{tab}
\S Kq8:/ \> $ \T notify/.q.r \leq 1 $ .
\end{tab}
Preservation of \S Kq8/ at 12 follows from \S Kq0/ and \S Kq6/.

We turn to preservation of \S Lq0/.  Recall that, for a relation $R$
and a set $S$, an element $s$ is called $R$-minimal in $S$ iff it
satisfies $s\in S$, and $(s',s)\notin R$ for all elements $s'\in
S$. Relation $R$ is called well-founded iff every nonempty set $S$ has
an $R$-minimal element. \S Lq0/ asserts that the relation \S Prio/,
which consists of the pairs $(q,r)$ with $q\in\S prio/.r$, is
well-founded.

It is easy to verify that, if $R$ is a well-founded relation, every
subrelation $R'\subseteq R$ is also well-founded. Therefore, \S Lq0/
is preserved by the modifications of \S prio/ in \T withdraw/ and \T
env13/, as these remove elements from \S Prio/. 

Preservation of \S Lq0/ at 12 follows from \S Iq0/, \S Kq0/, and \S
Lq1/. This is proved as follows. Assume that process $p$ executes line
12. Let us use \S Prio/ for relation \S Prio/ in the precondition, and
$\S Prio/^+$ for relation \S Prio/ in the postcondition of this step.
Assume that \S Lq0/ holds in the precondition.  Therefore, relation \S
Prio/ is well-founded.  It is easy to see that
\begin{tabn} \label{prio+}
\> $\S Prio/^+ \subseteq \S Prio/\cup\{(q, p)\mid q\in\S nbh/.p\}$ .
\end{tabn}

Let $S$ be a nonempty set of processes.  First, assume that
$S'=S\setminus \{p\}$ is nonempty. Therefore, $S'$ has a \S
Prio/-minimal element $r\in S'$. Because $p$ executes line 12, the set
$\S wack/.p$ is empty. Therefore \S Kq0/ implies $\T withdraw/.p.r =
0$, and \S Lq1/ implies $p\notin \S prio/.r$. Therefore $r$ is also \S
Prio/-minimal in $S$. It follows that $r$ is $\S Prio/^+$-minimal in
$S$ because of $r\ne p$ and formula (\ref{prio+}). 
It remains the case that $S\setminus \{p\}$ is empty. Then we have
$S=\{p\}$.  It follows that $p$ is a \S Prio/-minimal element of $S$
and, hence, a $\S Prio/^+$-minimal element of $S$ because of \S
Iq0/. 

In either case, the set $S$ contains a $\S Prio/^+$-mimimal
element. This proves that $\S Prio/^+$ is well-founded. Therefore \S
Lq0/ is preserved by the step of line 12.

Predicate \S Lq1/ is implied by the conjunction of \S Kq5/ and the
new invariant:
\begin{tab}
\S Lq2:/ \> $ \S prio/.q \subseteq \S before/.q\setminus \S after/.q $ .
\end{tab}
Predicate \S Lq2/ is inductive.  This concludes the proofs of the
invariants against deadlock.
\begin{remarks}
  Apart from \S Lq0/, all these invariants concern at most two
  processes. It is therefore possible to verify them by model
  checking. Such a system with two processes would not have too many
  states.  The proofs of \S Lq0/ and of absence of deadlock, however,
  do require theorem proving.

  The invariants of the lists \S Iq*/, \S Jq*/ relate to the inner
  protocol. The invariants of the lists \S Kq*/, \S Lq*/ are
  exclusively related to the outer protocol. The only invariant for
  both protocols is \S Iq0/.
\end{remarks}

\section{Liveness} \label{s.liveness}

The aim of this section is to show that the algorithm satisfies two
liveness properties: starvation freedom and maximal
concurrency. 

Starvation freedom means that every process that needs to enter \S CS/
and does not abort, will eventually enter \S CS/ and come back to the
idle state.  This can only be proved when all processes make progress
under weak fairness.

Maximal concurrency means that, even without global weak fairness
conditions, when process $p$ itself makes progress under weak fairness
and all processes answer messages from $p$, then whenever process $p$
needs to enter \S CS/ and does not abort, it will eventually enter \S
CS/ and come back to the idle state, unless it is, from some moment
onward, eternally in conflict with some other process. The latter case
is not deadlock, because the other process is not locked but only
fails to do steps as it is not subject to fairness conditions.

Weak fairness is introduced in Section \ref{intro-wf}. We formalize
executions as state sequences in Section \ref {execform}.  Section
\ref{inf_live} contains the formal definitions of weak fairness and
the statements of starvation freedom and maximal concurrency.  

The proofs of these two results are distributed over several
subsections.  In Section \ref{emptychan}, we introduce two more
invariants and show that all message channels are infinitely often
empty.  In Section \ref{loop_live}, we build the machinery to reduce
the proof obligations to progress at the lines 12, 13, and 14.  The
Sections \ref{progress12}, \ref{progress13}, \ref{progress14} treat
these lines. Section \ref{endofproofs} concludes the proofs of the two
theorems.

\subsection{Weak fairness} \label{intro-wf}

First, however, weak fairness needs an explanation.  Roughly speaking,
a system is called weakly fair if, whenever some process from some
point onward always can do a step, it will do the step.  Yet if a
process is idle, it must not be forced to be interested in \S
CS/. Similarly, if a process is waiting a long time in the entry
protocol, we do not want it to be forced to abort the entry
protocol. We therefore do not enforce the environment to do steps.  We
thus exclude the environment from the weak fairness conditions.

We impose the following weak-fairness conditions. For any process $p$,
if from some time onward $p$ can continuously do a \B forward/ step,
it will do the \B forward/ step. If, from some time onward, it can
continuously receive a message $m$ from some process $q$, it will
receive $m$ from $q$.

Formally, we do not argue about the fairness of systems, but
characterize the executions they can perform. Recall that an
\emph{execution} is an infinite sequence of states that starts in an
initial state and for which every pair of subsequent states satisfies
the step relation.  An execution is called \emph{weakly fair} iff, for
every process $p$, whenever $p$ can, from some state onward, always do
a \B forward/ step or receive a message $m$ from $q$, it will
eventually do the step or receive message $m$ from $q$. 

We also need that, when one of the alternatives \T after/ or \T prom/
of $\B receive/(q,p)$ is from some time onward continuously enabled,
this alternative is eventually taken. In the following, we therefore
treat these alternatives as if they correspond to messages $m = \T
after/$ or \T prom/ from $q$ to $p$.

\subsection{Formalization} \label{execform}

We formalize the setting in a set-theoretic version of (linear time)
temporal logic.  Let $X$ be the state space.  We identify the set
$X^\omega$ of the infinite sequences of states with the set of
functions $\Nat\to X$.  For a state sequence $\S xs/\in X^\omega$ and
$n\in\Nat$, we occasionally refer to $\S xs/(n)$ as the state a time
$n$. For a programming variable \T v/, we write $\S xs/(n).\T v/$ for
the value of \T v/ in state $\S xs/(n)$.

For a subset $U\subseteq X$, we define $\sem{U}\subseteq X^\omega $ as
the set of infinite sequences \S xs/ with $\S xs/(0)\in U$. For a
relation $A\subseteq X^2$, we define $\sem{A}_2\subseteq X^\omega$ as
the set of sequences \S xs/ with $(\S xs/(0),\S xs/(0))\in A$.

For $\S xs/\in X^\omega$ and $k\in \Nat$, we define the shifted
sequence $D(k, \S xs/)$ by $D(k, \S xs/)(n) = \S xs/(k+n)$.  For a
subset $P\subseteq X^\omega$ we define $\Box P$ (\emph{always} $P$)
and $\Diamond P$ (\emph{eventually} $P$) as the subsets of $X^\omega$
given by
\begin{tab}
\> $\S xs/\in \Box P \EQ (\all k\in\Nat: D(k, \S xs/)\in P) $  ,\\
\> $\S xs/\in \Diamond P \EQ (\ex k\in\Nat: D(k, \S xs/)\in P) $  .
\end{tab}

We now apply this to the algorithm.  We write $\S init/\subseteq X$
for the set of initial states and $\S step/\subseteq X^2$ for the step
relation on $X$. Following \cite{AbL91}, we use the convention that
relation \S step/ is reflexive (contains the identity relation).  An
\emph{execution} is an infinite sequence of states that starts in an
initial state and in which each subsequent pair of states is connected
by a step. The set of executions of the algorithm is therefore
\begin{tab}
\> $ \S Ex/ = \sem{\S init/}\cap \Box\sem{\S step/}_2$ .
\end{tab}

If $J$ is an invariant of the system, it holds in all states of every
execution. We therefore have $\S Ex/\subseteq \Box{J}$.

For our algorithm, the step relation $\S step/\subseteq X^2$ is the
union of the identity relation on $X$ (because \S step/ should be
reflexive) with the relations $\S step/(p)$ that consists of the state
pairs $(x,y)$ where $y$ is a state obtained when process $p$ does a
step starting in $x$.  The steps that process $p$ can do are
summarized in
\begin{tab}
\> $\S step/(p) \IS \S env/(p)\cup \S fwd/(p)\cup \bigcup_{q,m}
\S rec/(m,q,p)$ ,
\end{tab}
where $\S env/(p)$ consists of the steps of $\B environ/(p)$, $\S
fwd/(p)$ consists of the steps of $\B forward/(p)$, and $\S
rec/(m,q,p)$ consists of the steps where $p$ receives message $m$ from
$q$ in \B receive/. Note that we take the union here over all
processes $q$ and all seven alternatives $m$ of \B receive/ (including 
\T after/ and \T prom/).

We define $(q\B\ at /k)$ to be the subset of $X$ of the states in
which process $q$ is at line $k$. An execution in which process $q$ is
always eventually at \S NCS/, is therefore an element of
$\Box\Diamond\sem{q\B\ at /11}$. The aim is to prove that all
executions we need to consider are elements of this set.

\begin{remark} Note the difference between $\Box\Diamond\sem{U}$ and
  $\Diamond\Box\sem{U}$.  In general, $\Box\Diamond\sem{U}$ is a
  bigger set (a weaker condition) than $\Diamond\Box\sem{U}$. The
  first set contains all sequences that are infinitely often in $U$,
  the second set contains the sequences that are from some point
  onward eternally in $U$. \boks
\end{remark}

\subsection{Liveness under weak fairness} 
\label{inf_live}

For a relation $R\subseteq X^2$, we define the \emph{disabled} set
$D(R)=\{x\mid \all y: (x,y)\notin R\}$. Now \emph{weak fairness}
\cite{Lam94} for $R$ is defined as the set of state sequences in which
$R$ is disabled infinitely often or is taken infinitely often:
\begin{tab}
\> $ \S wf/(R) \IS \Box\Diamond \sem{D(R)} \cup \Box\Diamond \sem{R}_2$ .
\end{tab}
For a single process $p$, weak fairness for the steps of $\B
forward/(p)$ is the property $\S wf/(\S fwd/(p))$.

Our algorithm needs the property that every message, say $m$ in
transit from $q$ to $r$, is eventually received. The set $\S wf/(\S
rec/(m,q,r))$ contains precisely the state sequences that satisfy this
condition. This also applies to $m=\T after/$ and \T prom/.

For some purposes, we need the assumptions of weak fairness for the
steps of a single process $p$ and for all messages with $p$ as
destination or source. We thus define the set of \emph{executions weakly
fair for} $p$ by
\begin{tab}
  \> $ \S Wf/(p) = \S Ex/\cap\S wf/(\S fwd/(p))\cap\bigcap_{q, m}
(\S wf/(\S rec/(m,q,p))\cap \S wf/(\S rec/(m,p,q))) $ .
\end{tab}
The set of (globally) \emph{weakly fair executions} is defined by
messages, as captured in
\begin{tab}
  \> $ \S WF/ = \S Ex/\cap\bigcap_p\S wf/(\S fwd/(p))\cap\bigcap_{p, q, m}
\S wf/(\S rec/(m,q,p))$ .
\end{tab}

We can now formulate our two liveness results.  Starvation freedom
means that every process $p$ in every weakly fair execution is always
eventually back at \S NCS/ (i.e. at line 11). This is expressed by

\begin{theorem} \label{liveness} $ \S WF/\subseteq
  \Box\Diamond\sem{p\B\ at /11}$ \ for every process $p$.
\end{theorem}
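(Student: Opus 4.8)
The plan is to prove $\S WF/\subseteq\Box\Diamond\sem{p\B\ at/11}$ by a chain of progress arguments that pushes an arbitrary process $p$ through the line numbers $12,13,14,15,16$ and back to $11$. The backbone of the argument will be the observation that it suffices to show, for each non-idle line $\ell\in\{12,\dots,16\}$, that a weakly fair execution in which $p$ stays at $\ell$ forever is impossible; since there are finitely many lines, any execution that never reaches $11$ must get stuck at one of them, contradiction. Lines $15$ and $16$ are immediate: $\B forward/(p)$ is unconditionally enabled there, so weak fairness $\S wf/(\S fwd/(p))$ forces the step. So the real content is progress out of lines $12$, $13$, and $14$, which is exactly what the excerpt says Sections \ref{progress12}, \ref{progress13}, \ref{progress14} establish. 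The proof of Theorem \ref{liveness} will therefore \emph{assume} those three per-line progress lemmas (together with the machinery of Section \ref{loop_live} that reduces the global obligation to these local ones and handles the $\Box\Diamond$ structure) and assemble them.

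First I would set up the infrastructure: fix $p$ and an execution $\S xs/\in\S WF/$. Using the invariant $\S Jq0/$, at every state $p$ is at one of the lines $11,\dots,16$. I would argue contrapositively: suppose $\S xs/\notin\Box\Diamond\sem{p\B\ at/11}$, i.e.\ there is a time after which $p$ is never at $11$. Because $p$ only ever moves forward through $12,13,14,15,16$ and then jumps to $11$ (inspect \B forward/ and the abort alternatives of \B environ/, which also land at $11$), after that time $p$'s line number is eventually constant at some $\ell\in\{12,\dots,16\}$ — it cannot keep increasing without hitting $16$ and then $11$. So there is a suffix $D(k,\S xs/)$ on which $p$ is permanently at $\ell$. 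Now I invoke the appropriate local progress result for $\ell$ to derive a contradiction: for $\ell\in\{15,16\}$ the step of \B forward/ is continuously enabled, so $\S wf/(\S fwd/(p))$ forces $p$ to leave; for $\ell\in\{12,13,14\}$ the corresponding section's lemma says that under the weak-fairness assumptions bundled into $\S WF/$ (message reception for all channels, plus $\S wf/(\S fwd/(p))$), $p$ cannot remain at $\ell$ forever. Either way we contradict the standing assumption, completing the proof.

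The one subtlety worth flagging is the interaction between ``$p$ stuck at $\ell$ forever'' and the fact that \emph{other} processes keep moving: the local progress lemmas for lines $13$ and $14$ are not about $p$ in isolation but rely on a well-foundedness/minimality argument over the set of processes currently waiting (the relation $\S Prio/$ for line $13$ via $\S Lq0/$, and the natural-number order for line $14$ since the lowest waiter has priority), much as in the deadlock proof (Theorem \ref{thm-no-deadlock}). In the liveness setting this becomes an induction: the minimal waiter makes progress first, then the next, and because the set of processes in the inner protocol is finite and FCFS (the outer protocol) prevents any process from overtaking $p$ twice, $p$'s rank in the waiting order strictly decreases, so $p$ eventually becomes minimal and proceeds. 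Packaging that argument cleanly — in particular, phrasing ``eventually decreases'' as a temporal-logic statement and discharging it by well-founded induction on the rank — is the main obstacle, but that work lives in Sections \ref{loop_live}, \ref{progress13}, \ref{progress14}; at the level of Theorem \ref{liveness} itself, the proof is just the finite case split over $\ell$ plus citation of those results.
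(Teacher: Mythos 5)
Your proposal matches the paper's own proof: the paper likewise reduces Theorem \ref{liveness} to per-line progress via the $\S toIdle/$ machinery of Section \ref{loop_live} (formulas (\ref{basis}), (\ref{basis17}), (\ref{reduce})), dispatches lines 15 and 16 by unconditional enabledness ((\ref{from15})), and cites the line-specific results (\ref{at12evtNot}), (\ref{at13_14}), (\ref{not14}) for lines 12, 13, 14. Your "stuck at some eventually-constant line $\ell$" contrapositive is exactly what (\ref{reduce}) formalizes, and your flagged subtlety about minimality over waiting processes is indeed where the real work lives in Sections \ref{progress13} and \ref{progress14}.
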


Maximal concurrency means that every process $p$ that needs to enter
\S CS/ and does not abort, will eventually enter \S CS/, provided it
satisfies weak fairness itself, all other processes receive and answer
messages from $p$, and no process comes in an eternal conflict with
$p$.  Let us define $p\bowtie q$ to mean that $p$ and $q$ are in
conflict, i.e., $q\in\S nbh/.p\land p\in\S nbh/.q$.  Then maximal
concurrency is expressed by

\begin{theorem} \label{maxconcur} $ \S Wf/(p)\subseteq
  \Box\Diamond\sem{p\B\ at /11} \cup
  \bigcup_q\Diamond\Box\sem{p\bowtie q} $ \ for every process $p$.
\end{theorem}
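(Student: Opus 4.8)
The plan is to prove the contrapositive shape: fix an execution $\S xs/\in\S Wf/(p)$ that is \emph{not} in $\bigcup_q\Diamond\Box\sem{p\bowtie q}$, and show $\S xs/\in\Box\Diamond\sem{p\B\ at /11}$. The first step is to reduce the global claim to a local one, exactly as announced for the machinery of Section \ref{loop_live}: since \B forward/$(p)$ is a loop through lines $11,12,13,14,15,16$ and back, and the only non-looping exit is via the environment's abort steps (which, by the assumption of no abort, do not fire), it suffices to show that $p$ cannot be stuck forever at any single line $12,13,14,15,16$. Lines $15$ and $16$ terminate under $\S wf/(\S fwd/(p))$ because their guards are just $\S pc/=15$, $\S pc/=16$ (here one uses that \S CS/ terminates, as postulated). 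So the real content is progress past lines $12$, $13$, and $14$, which is where Sections \ref{progress12}--\ref{progress14} come in; I would invoke those results as black boxes once their hypotheses are met.

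The second step is to discharge those hypotheses from the definition of $\S Wf/(p)$ together with the no-eternal-conflict assumption. Progress past line $12$ needs emptiness of $\S wack/.p$, i.e. that the \T notify/ and \T withdraw/ messages $p$ sent get acknowledged; by \S Kq0/ this reduces to $p$'s neighbours receiving $p$'s messages and answering (the \T after/ and \T ack/ steps), which is precisely $\bigcap_{q,m}\S wf/(\S rec/(m,p,q))$ plus $\S wf/(\S rec/(\T ack/,q,p))$, all contained in $\S Wf/(p)$. Progress past line $13$ needs emptiness of $\S prio/.p$; the key point is that $\S prio/.p\subseteq\S before/.p\setminus\S after/.p$ (invariant \S Lq2/) and that each $q\in\S prio/.p$ is in conflict-like relation with $p$ via \S Lq1/ and \S Kq5/, so an element that \emph{stays} in $\S prio/.p$ forever forces an eternal conflict $p\bowtie q$ — contradicting our standing assumption — or else is removed by a \T withdraw/ that $\S Wf/(p)$ guarantees is received. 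Progress past line $14$ needs emptiness of $\S need/.p$; here $\S need/.p$ consists of higher neighbours and of $q\in\S away/.p$, and each such $q$ is a conflict partner (by \S Iq6/, \S Iq0/), so again either the conflict is eternal (excluded) or $q$ eventually grants the fork (a \T gra/ message, received under $\S wf/(\S rec/(\T gra/,q,p))\subseteq\S Wf/(p)$, after the \T prom/ step fires on $q$'s side, which is again a message-reception fairness we assumed for all of $p$'s outgoing messages).

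The third step is assembling these into the liveness-logic conclusion: from ``not stuck at line $\ell$ for each $\ell$'' one concludes by induction along the loop that $p$ visits line $11$ again, hence $\S xs/\in\Box\Diamond\sem{p\B\ at /11}$; this is the standard loop-progress lemma of Section \ref{loop_live}. I expect the main obstacle to be the line-$14$ argument, because progress there rests on the fork-priority argument rather than on a simple FCFS count: one must argue that the \emph{lowest} conflicting process waiting at line $14$ actually does get all its forks, and transferring that "lowest gets through" fact into "therefore $p$ eventually gets through, given $p$'s own conflicts are not eternal" needs the finiteness of the relevant set of conflicting processes in the inner protocol together with the FCFS blocking that prevents a process from overtaking $p$ twice — the same ingredients flagged in the informal correctness discussion. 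Formalizing that "no second overtaking'' step, and showing it interacts correctly with the no-eternal-conflict hypothesis (rather than the global weak fairness used for Theorem \ref{liveness}), is the delicate part; everything else is bookkeeping with the invariants already established.
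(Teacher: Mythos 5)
Your overall skeleton is the paper's: reduce to ``not stuck forever at line $k$'' for $k=12,13,14$ via the loop machinery of Section \ref{loop_live}, discharge line 12 from \S Kq0/ and message fairness, and let lines 13 and 14 resolve into the eternal-conflict disjunct rather than into the global well-ordering arguments used for starvation freedom. Your line-12 and line-13 arguments are essentially the paper's (for line 13 the conflict comes from \S Lq2/, \S Kq5/ and \S Mq1/, close to what you cite). The gap is at line 14, in two places. First, \S Iq6/ and \S Iq0/ only give $q\in\S nbh/.p$ and $q\ne p$; they do \emph{not} give $p\in\S nbh/.q$, so ``each such $q$ is a conflict partner'' is unjustified as stated. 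Establishing $p\in\S nbh/.q$ eventually always is the substantive content of this step, and it needs a case split: for $q<p$ one uses \S Jq2/, \S Iq4/ and \S Jq3/ (the lower process only holds the fork while it is in $\{14\dots\}$ with $p$ in its neighbourhood); for $p<q$ one uses \S Jq1/, \S Iq3/ and the guard of the \T prom/ alternative (which is blocked exactly when $q\B\ in/\{15\dots\}\land p\in\S nbh/.q$). Second, your dichotomy ``either the conflict is eternal or $q$ eventually grants the fork'' silently assumes that $\S need/.p$ only shrinks and that the relevant conditions are eventually constant. Neither holds a priori: $\S need/.p$ can \emph{grow} at line 14 via \T prom/, and weak fairness of \T prom/ on $q$'s side only yields that its guard fails \emph{infinitely often}, not \emph{eventually always}. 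The paper closes exactly this hole with the variant function $\S vf/(q,p)$, which is shown to be eventually non-increasing (using \S Kq4/, \S Mq1/, \S Iq8/, \S Iq3/, \S Iq4/, \S Iq7/) and hence eventually constant, upgrading ``infinitely often'' to ``eventually always'' for both the membership $q\in\S need/.p$ and the conflict condition. Without some such stabilization argument the step from formula-level fairness to $\Diamond\Box\sem{p\bowtie q}$ fails.

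Your closing paragraph also misdirects the effort: the ``lowest conflicting process at line 14'' and ``no second overtaking'' ingredients belong to the starvation-freedom proof (formula (\ref{not14}) and Theorem \ref{liveness}), where global weak fairness lets one reason about the progress of processes other than $p$. Under $\S Wf/(p)$ those processes need not be fair, so that machinery is unavailable --- and it is also unnecessary, because for maximal concurrency you do not have to show that $p$ leaves line 14 when conflicts are transient; you only have to show that being stuck there forever \emph{produces} an eternal conflict, which is one of the permitted outcomes of the theorem.
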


The remander of this section is devoted to the proofs of these two
theorems. These proofs have a significant overlap. On the other hand,
the proof of Theorem \ref{liveness} has similarities with the proof of
absence of deadlock (Theorem \ref{thm-no-deadlock} in Section
\ref{imm_dead}).

\subsection{Empty channels} \label{emptychan}

At this point, we postulate two additional invariants:
\begin{tab}
\S Mq0:/ \> $ \T gra/.q.q = 0 $ ,\\
\S Mq1:/ \> $ r\in \S prio/.q \Implies q\B\ at /13 \Land r\in\S nbh/.q$ .
\end{tab}
Predicate \S Mq0/ is preserved at 16 and \T req/ because of \S Iq9/
and \S Iq8/. Predicate \S Mq1/ is inductive. 

We now claim that $m.q.r \leq 1$ always holds for all five message
types $m$ and all processes $q$ and $r$. For \T req/, \T notify/, \T
withdraw/, \T ack/, this follows from \S Jq6/, \S Kq8/, and \S Kq0/.
For \T gra/, it follows from \S Iq4/, \S Iq7/, and \S Mq0/.

As every state in every execution satisfies all invariants, and the
reception of message $m$ from $q$ by $r$ decrements $m.q.r$, it
follows that we have
\begin{tabn} \label{disMess}
\> $ \S Ex/\cap\S wf/(\S rec/(m,q,r)) \subseteq \Box\Diamond\sem{m.q.r = 0}$ .
\end{tabn}
In words, every message channel is infinitely often empty. 

One can do similar assertions about the alternatives \T after/ and \T
prom/, but this is not useful.

\subsection{Treating the loop} \label{loop_live}

We use the \emph{leads-to} relation between state predicates of
\cite{ChM88}.  A predicate $U$ is said to \emph{lead to} $V$ if it is
always the case that if $U$ holds, then eventually $V$ holds.  This is
formalized as follows.  For subsets $U$ and $V$ of $X$, the set of
state sequences in which $U$ \emph{leads to} $V$ is defined by
\begin{tab}
\> $ \S LT/(U,V) = \Box(\neg \sem{U}\cup \Diamond\sem{V}) $ .
\end{tab}

Our specific algorithm is a simple loop with the property that, in any
execution, if some process $p$ does not get stuck at a line $k$,
it will eventually proceed to line $k+1$ or to 11. 

We need to prove that a process reaches 11 from a combination of
lines. We therefore define
\begin{tab}
\> $ \S toIdle/(k, p) = \S LT/(p \B\ in / \{k\dots\}, p \B\ at /11) $ .
\end{tab}
The relevance of this concept follows from the inclusion:
\begin{tabn} \label{basis}
\> $ \S Ex/\cap \S toIdle/(12, p) \subseteq \Box\Diamond\sem{p\B\ at /11}$ .
\end{tabn}
This holds because, in any execution that belongs to $\S toIdle/(12,
p)$, if at some time $p$ is not at 11, then $p$ is in $\{12\dots\}$ by
\S Jq0/, and therefore $p$ will return to 11.

On the other hand, in any execution, process $p$ is never in
$\{17\dots\}$ by \S Jq0/. We therefore have
\begin{tabn} \label{basis17}
\> $ \S Ex/\subseteq \S toIdle/(17, p) $ .
\end{tabn}
The aim is thus to decrement the first argument of \S toIdle/. This is
done with the relation
\begin{tabn} \label{reduce}
\> $ \S Ex/ \cap \S toIdle/(k+1, p) \subseteq \S toIdle/(k, p)
\cup\Diamond\Box\sem{p\B\ at /k}$ .
\end{tabn}
This formula just means that, in any execution, a process $p$ that is
ever at line $k$, but from that time onward never at line 11 or in
$\{k+1\dots\}$, needs to remain at $k$.

As a process is never disabled at the lines 15 or 16, weak fairness
implies that it never stays there, i.e., we have
\begin{tab}
  \> $ \S Ex/\cap \S wf/(\S fwd/(p)) \cap \Diamond\Box\sem{p\B\ at /k}
  = \emptyset $ \ for  $k=15 $, 16.
\end{tab}
Therefore, the formulas (\ref{basis17}) and (\ref{reduce}) imply that
\begin{tabn} \label{from15}
\> $ \S Ex/\cap \S wf/(\S fwd/(p)) \subseteq \S toIdle/(15, p)$ .
\end{tabn}
It remains to eliminate the executions in $\Diamond\Box\sem{p\B\ at /k}$
for $k=12$, 13, and 14. This is done in the next three sections. 

\subsection{Progress at line 12} \label{progress12}

Consider an execution $\S xs/ \in \Diamond\Box\sem{p\B\ at /12}$.
From some time $n_0$ onward, process $p$ is and remains at line
12. Therefore, weak fairness of $\S fwd/(p)$ implies that it is
infinitely often disabled. It is disabled iff the set $\S wack/.p$ is
nonempty. While $p$ is at line 12, the finite set $\S wack/.p$ can
only become smaller. It therefore is eventually constant and
nonempty. So, there is a process $q$ such that, from time $n_0$
onward, $q\in \S wack/.p$ always holds.

Assume that $\T ack/.q.p > 0$ holds at some time $n_1\geq n_0$. Then, by
formula (\ref{disMess}), there is a time $n\geq n_1$ such that $\T
ack/.q.p=0$; therefore the \T ack/ message is received and $q\in\T
wack/.p$ is falsified. This proves that $\T ack/.q.p = 0$ holds at any
time $n\geq n_0$.

By formula (\ref{disMess}), there is a time $n_1\geq n_0$ such that
$\T withdraw/.p.q = 0$. By \S Kq0/, we then have $p\in\S
after/.q$. This can only be falsified by the alternative \T after/,
which sends a message $ \T ack/.q.p$. We therefore have that $p\in\S
after/.q$ holds at all time $n\geq n_1$. By formula (\ref{disMess}),
there is a time $n_2\geq n_1$ such that $\T notify/.p.q = 0$ and hence
$p\in\S before/.q$ by \S Kq1/.  This can only be falsified by the
alternative \T after/ which sends a message $ \T ack/.q.p$. We
therefore have that $p\in\S before/.q$ holds at all time $n\geq n_2$.
Therefore, the alternative \T after/ is eternally enabled from time
$n_2$ onward. By weak fairness it will be taken, thus falsifying
$p\in\S before/.q$. This is a contradiction.

We have derived this contradiction using weak fairness of $\S
fwd/(p)$, and of $\S rec/(m, p, q)$ and $\S rec/(m, q, p)$ for all $q$
and some $m$.  We therefore have proved that
\begin{tabn} \label{at12evtNot}
\> $ \S Wf/(p) \cap \Diamond\Box\sem{p\B\ at /12} = \emptyset$ .
\end{tabn}

\subsection{Progress at line 13} \label{progress13}

We now want to exclude the possibility that in some execution some
process is eventually always at line 13.  For an execution \S xs/ and
a time $n\in\Nat$, let $S(n, \S xs/)$ be the set of processes that, in
\S xs/, from time $n$ onward, is always at 13.  The invariant \S Lq0/
in the state $\S xs/(n)$ implies:
\begin{tabn} \label{at13notPrio}
\> $ \S xs/\in\S Ex/\Land S(n, \S xs/)\ne\emptyset $\\
\> $ \Implies \ex p\in S(n, \S xs/): 
   S(n, \S xs/)\cap\S xs/(n).\S prio/.p=\emptyset $ .
\end{tabn}

Let $p\in S(n, \S xs/)$.  From time $n$ onward, process $p$ is and
remains at line 13.  By weak fairness of $\S fwd/(p)$, process $p$ it
is infinitely often disabled.  As it is at line 13, process $p$ is
disabled iff the set $\S prio/.p$ is nonempty. While $p$ is at line
13, the finite set $\S prio/.p$ can only become smaller. It therefore
is eventually constant and nonempty. So, there is a process $q$ such
that from time $n$ onward, $q\in \S prio/.p$ always holds.  It follows
that, from time $n$ onward, process $p$ does not receive \T withdraw/
from $q$. By weak fairness of $\S rec/(\T withdraw/, q, p)$, it
follows that $\T withdraw/.q.p=0$ holds from time $n$ onward.  We thus
have proved:

\begin{tabn} \label{at13prio} 
  \> $ \S xs/\in\S Ex/\Land p\in S(n, \S xs/) 
  \Land \S xs/\in\S Wf/(p) $ \\
  \> $ \Implies \ex q: \all i: q\in\S xs/(n+i).\S prio/.p \Land 
  \S xs/(n+i).\T withdraw/.q.p=0 $ .
\end{tabn}
At this point, we note that the invariants \S Kq5/, \S Lq2/, and \S
Mq1/ together imply:
\begin{tabn} \label{prioGives}
\> $ q\in \S prio/.p \Land \T withdraw/.q.p=0 \Implies
q\B\ in /\{13, 14\}\Land p\bowtie q $ .
\end{tabn}
Therefore, formula (\ref{at13prio}) implies
\begin{tabn} \label{at13conflict}
\> $ \S Wf/(p) \cap\Diamond\Box\sem{p\B\ at /13}
\subseteq \bigcup _q \Diamond\Box \sem{ p\bowtie q} $ .
\end{tabn}

For the sake of starvation freedom, we combine (\ref{at13notPrio}) 
and (\ref{at13prio}) to
\begin{tabn} \label{at13_14}
\> $ \S WF/ \cap\Diamond\Box\sem{r\B\ at /13}
\subseteq \bigcup _q \Diamond\Box \sem{ q\B\ at /14} $ .
\end{tabn}
This formula is proved as follows. Let \S xs/ be in the lefthand set.
Then there is a number $n$ such that $S(n, \S xs/)$ is
nonempty. Formula (\ref{at13notPrio}) gives some process $p\in S(n, \S
xs/)$ with $\S prio/.p$ disjoint from $S(n, \S xs/)$.  As $\S
WF/\subseteq \S Wf/(p)$, the formulas (\ref{at13prio}) and
(\ref{prioGives}) yield a process $q$ that is and remains in $\S
prio/.p$ and that is and remains in $\{13, 14\}$. As $q\notin S(n, \S
xs/)$, at follows that $q$ is eventually always at 14.

\subsection{Progress at line 14} \label{progress14}

Let \S xs/ be an execution in $\Diamond\Box\sem{p\B\ at /14}$.
Process $p$ waits at line 14 for emptiness of \S need/. This condition
belongs to the inner protocol.  The inner protocol in isolation,
however, is not starvation free because it would allow a lower process
repeatedly to claim priority over $p$ by sending \T req/s.  We need
the FCFS property of the outer protocol to preclude this.
Technically, the problem is that $\S need/.p$ can grow at line 14 in
the alternative \T prom/.

Partly, in order to prove that eventually the truth value of $q\in\S
need/.p$ is constant, we construct a numeric state function $\S
vf/(q,p) \geq 0$ that, for $q\in\S nbh/.p$, eventually never increases
and therefore stabilizes to a constant value, and that it is only
constant when the truth value of $q\in\S need/.p$ is also constant. We
construct \S vf/ as the weighted sum of three bit-valued state
functions:
\begin{tab}
\> $ \S vf/(q, p) =  \S vf0/(q,p) + 2\cdot \S vf1/(q,p) 
+ 4\cdot \S vf2/(q,p)  $ \ where \\
\>\> $ \S vf0/(q, p) = |\, q\in\S need/.p\, | $ ,\\
\>\> $ \S vf1/(q, p) = |\, \S fork/.q.p+\T gra/.p.q = 0 \Land q < p\,| $ ,\\
\>\> $ \S vf2/(q, p) = |\, q\B\ in /\{13\dots\} \Land p\in\S nbh/.q
\Land p\notin\S prio/.q\,| $ .
\end{tab}  
Indeed, when process $p$ is and remains at line 14, its neighbourhood
$\S nbh/.p$ is constant. For any $q\in\S nbh/.p$, we have eventually
$\T notify/.p.q = 0$ by formula (\ref{disMess}). This remains valid
because $p$ at 14 does not send \T notify/. By \S Kq4/, we therefore
have eventually always $p\in\S before/.q$.

We claim that, while $p$ is at line 14 and $p\in\S before/.q$ holds,
$\S vf/(q,p)$ never increases.  This is proved as follows. At line 12,
$\S vf2/(q,p)$ does not increase because of \S Kq0/ and \S Kq2/. The
same holds for \T withdraw/. At line 16, $\S vf1/(q,p)$ can be
incremented, but this is compensated by decrementation of $\S
vf2/(q,p)$ because of \S Mq1/. The difficult alternative is \T prom/,
because it can increment $\S vf0/(q,p)$ by adding $q$ to $\S
need/.p$. In that case, \S Iq8/ implies $q<p$. Therefore, $\S
vf1/(q,p)$ is decremented because of \S Iq3/, \S Iq4/, and \S Iq7/.
This proves the claim.

It follows that, if process $p$ is and remains at line 14, eventually
$\S vf/(q, p)$ becomes constant.  When $\S vf/(q, p)$ is constant,
$q\in\S need/.p$ is also constant because $q\in\S need/.p$ holds if
and only if $\S vf/(q, p)$ is odd.  This proves that, eventually, the
truth value of $q\in\S need/.p$ becomes constant.

As $\S need/.p$ is always a subset of the finite set $\S nbh/.p$,
which is constant while $p$ is at line 14, we can now conclude that
eventually $\S need/.p$ is constant. If this constant would be the
empty set, process $p$ would be eventually always enabled, and by weak
fairness, process $p$ would leave line 14. Therefore, there is some
process $q$ eventually always in $\S need/.p$. We have $q\ne p$
because of \S Iq0/ and \S Iq6/. This proves
\begin{tabn} \label{evtNeed} 
  \> $ \S Wf/(p) \cap \Diamond\Box\sem{p\B\ at /14} 
  \subseteq \bigcup_{q\ne p} \Diamond\Box \sem{q\in\S need/.p} $ .
\end{tabn}

We now distinguish the cases $q<p$ and $p < q$. For the first case, we
claim:
\begin{tabn} \label{14low}
  \> $ q < p \Implies \S Wf/(p) \cap \Diamond\Box \sem{q\in \S need/.p} 
  \subseteq \Diamond\Box \sem{q\B\ in /\{14\dots\}\land p\in\S nbh/.q} $ .
\end{tabn}
This is proved as follows. Firstly, $q\in\S need/.p$ implies $\S
fork/.p.q=0$ by \S Jq2/.  By (\ref{disMess}) we have infinitely often
$\T gra/.q.p=0$. Therefore, \S Jq3/ and \S Iq4/ imply that the
conjunction $q\B\ in /\{14\dots\}\land p\in\S nbh/.q$ holds infinitely
often.

When process $q$ leaves $\{14\dots\}$ by executing line 16 or \T
env14/, it decrements $\S vf2/(q,p)$ because of \S Mq1/; it therefore
also decrements $\S vf/(q,p)$ (even if $\S vf1/(q,p)$ increases).  As
$\S vf/(q,p)$ is eventually constant, it follows that eventually $q$
remains in $\{14\dots\}$ and therefore $p$ remains in $\S
nbh/.q$. This proves (\ref{14low}).

For the second case, we claim: 
\begin{tabn} \label{14high} 
\> $ p < q \Implies \S Wf/(p) \cap
\Diamond\Box \sem{q\in \S need/.p} \subseteq \Diamond\Box \sem{q\B\ in
/\{15\dots\}\land p\in\S nbh/.q } $ .
\end{tabn}
This is proved as follows.  The fact that $q$ remains in $\S need/.p$
together with \S Iq6/ and formula (\ref{disMess}) are used to prove
that eventually always $\T req/.p.q = 0$, and $ \T gra/.q.p = 0$, and
$ \T gra/.p.q = 0$.  This implies eventually always $p\in\S prom/.q$
by \S Jq1/, and $p\notin\S away/.q$ by \S Jq2/, \S Iq3/, and \S
Iq4/. Yet, the alternative \T prom/ is not taken anymore.  Therefore,
weak fairness implies that $ q\B\ in /\{15\dots\}\land p\in\S nbh/.q$
holds infinitely often. Again using that $\S vf/(q,p)$ is eventually
constant, we get that process $q$ eventually remains in $\{15\dots\}$
and that $p$ remains in $\S nbh/.q$. This proves (\ref{14high}).

As $q\in\S need/.p$ implies $q\in \S nbh/.p$ by \S Iq6/, the formulas 
(\ref{evtNeed}), (\ref{14low}), (\ref{14high}) combine to yield 
\begin{tabn} \label{at14conflict}
\> $ \S Wf/(p) \cap\Diamond\Box\sem{p\B\ at /14}
\subseteq \bigcup _q \Diamond\Box \sem{ p\bowtie q} $ .
\end{tabn}

With regard to starvation freedom, we use the formulas
(\ref{evtNeed}), (\ref{14low}), (\ref{14high}) to prove
\begin{tabn} \label{not14}
  \> $ \S WF/ \cap \Diamond\Box\sem{r\B\ at /14} = \emptyset$ .
\end{tabn}
This is done by contradiction. Assume that \S xs/ is an element of the
lefthand expression. Let $p$ be the lowest process with $\S xs/ \in
\Diamond\Box\sem{p\B\ at /14}$. Formula (\ref{evtNeed}) gives us a
process $q\ne p$ with $ \S xs/\in\Diamond\Box\sem{ q\in\S need/.p}$.
By (\ref{from15}), we have
\begin{tabn} \label{obv15}
\> $\S xs/\in\S toIdle/(15, q)$ . 
\end{tabn}
If $q < p$, formula (\ref{14low}) gives $\S xs/\in \Diamond\Box
\sem{q\B\ in /\{14\dots\} } $. Together with (\ref{obv15}), this
implies $\S xs/\in \Diamond\Box\sem{q\B\ at /14}$, contradicting the
minimality of $p$. If $p < q$, formula (\ref{14high}) gives $\S xs/\in
\Diamond\Box \sem{q\B\ in /\{15\dots\} } $, which contradicts
(\ref{obv15}).  This concludes the proof of (\ref{not14}).

\subsection{End of proofs} \label{endofproofs}

Theorem \ref{liveness} follows from the formulas (\ref{basis})
and (\ref{from15}) by repeated application of (\ref{reduce})
with (\ref{at12evtNot}), (\ref{at13_14}), (\ref{not14}). 

Similarly, Theorem \ref{maxconcur} follows from the formulas
(\ref{basis}) and (\ref{from15}) by repeated application of
(\ref{reduce}) with (\ref{at12evtNot}), (\ref{at13conflict}), 
 (\ref{at14conflict}).

\section{Sketch of the PVS verification} \label{PVS_sketch}

The reader who is familiar with PVS can obtain the dump of the proof
script from our website \cite{HesUrlPartialMX}. The proof consists of
183 lemmas that can be verified within three minutes on an ordinary
laptop.

For the reader who is not familiar with proof assistants, we give some
indications here how we used the proof assistant PVS.

The first thing to do is to declare the state space of the algorithm. 
This is done by means of the declarations:
\begin{verbatim}
  Process: TYPE FROM nat
\end{verbatim}
\begin{verbatim}
  state: TYPE = [#
    fork: [Process -> [Process -> int]], 
    req, gra, notify, withdraw, ack: [Process -> [Process -> nat]],
    pc: [Process -> nat],
    nbh, need, prom, away, wack, before, prio, after: 
      [Process -> finite_set[Process]]
  #]
\end{verbatim}
The first line says that \T Process/ is an unspecified subset of
$\Nat$.  The second line declares \T state/ as a record with the
program variables as fields.

In order to inform PVS of the types of the free variables that we are
going to use, we declare
\begin{verbatim}
  p, q, r: VAR Process
  x, y: VAR state
\end{verbatim}

The next point is to define the step relation of the transition system
according to Figures \ref{env}, \ref{fwd}, \ref{rcv}. We construct it
as the union of separate relations for each of the 16 alternatives.
As an example, the state modification at line 12 of \B forward/ is
given by
\begin{verbatim}
  next12(p, x): state =
    x WITH [
      `notify(p) := LAMBDA q: x`notify(p)(q) + b2n(x`nbh(p)(q)),
      `prio(p) := {q | x`nbh(p)(q) AND x`before(p)(q) 
                       AND NOT x`after(p)(q)},
      `pc(p) := 13
    ]
\end{verbatim}
Here \T x`nbh(p)(q)/ means $q\in\S nbh/.p$ in state \T x/, and \T b2n/
is the function that converts a Boolean to a bit. The corresponding
step relation is:
\begin{verbatim}
  step12(p, x, y): bool =
    x`pc(p) = 12 AND y = next12(p, x) AND empty?(x`wack(p))
\end{verbatim}

When the step relation has been constructed, we turn to the
construction and the verification of the invariants. This is a major
effort. One of the simpler cases is \S Iq8/.
\begin{verbatim}
  iq8(q, r, x): bool =
    x`prom(r)(q) IMPLIES q < r
\end{verbatim}
\begin{verbatim}
  iq8_rest: LEMMA 
    iq8(q, r, x) AND step(p, x, y) 
    IMPLIES iq8(q, r, y) OR stepReq(p, x, y)
\end{verbatim}
\begin{verbatim}
  iq8_Req: LEMMA 
    iq8(q, r, x) AND stepReq(p, x, y) AND iq9(q, r, x)
    IMPLIES iq8(q, r, y)
\end{verbatim}
\begin{verbatim}
  iq8_step: LEMMA 
    iq8(q, r, x) AND step(p, x, y) AND iq9(q, r, x)
    IMPLIES iq8(q, r, y)
\end{verbatim}
This shows that preservation of \S Iq8/ only needs \S Iq9/ at the
alternative \T req/. When all invariants separately have been
constructed, we form the conjunction \T globinv/ of the universal
quantifications of them and prove that \T globinv/ is inductive:
\begin{verbatim}
  globinv_step: LEMMA
    globinv(x) AND step(p, x, y) 
    IMPLIES globinv(y)
\end{verbatim}
\begin{verbatim}
  globinv_start: LEMMA
    start(x) IMPLIES globinv(x)
\end{verbatim}

We formalize state sequences, executions, and weak fairness just as
described in Section \ref{s.liveness}. For instance, function \S wf/
of Section \ref{inf_live} is given by
\begin{verbatim}
  weakly_fair(rel)(xs): bool =  
    box(diamond(sem1(disabled(rel))))(xs) 
    OR box(diamond(sem2(rel)))(xs)
\end{verbatim}
Finally, Theorem \ref{liveness} is given by
\begin{verbatim}
  liveness: THEOREM  % starvation freedom
    weakly_fair_all(xs) AND execution(xs) 
    IMPLIES box(diamond(sem1(at(11, p))))(xs)
\end{verbatim}
 
\section{Conclusions} \label{conclusion}

The acyclicity invariant introduced by Chandy and Misra for their
drinking philosophers \cite{ChM84}, was not necessary for liveness,
and it was problematic for the message complexity and memory
requirements.  In our solution, we abandon the acyclicity invariant.
We also break the symmetry, in two ways. Firstly by giving higher
priority to the lower processes. Secondly by using an asymmetric
default fork distribution, in which the higher processes are better
off in the sense that they need not request forks.

Originally, we had a weaker outer protocol that only added starvation
freedom to the inner protocol, and the inner protocol used the low
default fork distribution with $\S fork/.q.r = |q < r|$.  When we had
completed its proof, we disliked the order in which processes could
enter \S CS/.  We therefore investigated the high default, and
postulated the FCFS property.  It came as a surprise to us that this
resulted in a somewhat simpler algorithm.

The design of the algorithm was only possible, because we could use
the proof assistant PVS \cite{OSR01} to verify the invariants and the
progress requirements.  We could fruitfully reuse parts of the
PVS-proof of the earlier algorithms in the proof of the final
algorithm.

We hope that the algorithm or variations of it can be used in the
design of practical resource allocation algorithms. Indeed, when the
processes compete for several resources, Chandy and Misra \cite{ChM84}
suggest to use coloured bottles (forks) with a colour for every
resource.  This can also be done in our algorithm. There should,
however, be some relationship between the resources in the application
and the neighbourhoods in our model, and this relationship should be
exploited for better performance in the presence of many processes.
For applications on the internet, one would need to extend the
algorithm with fault tolerance.  These extensions are matters for
future research.

\bibliographystyle{plain} 
\bibliography{refs}

\end{document}